\newtheorem{theorem}{Theorem}
\begin{document}
%
\title{Decentralized Probabilistic Auto-Scaling for Heterogeneous Systems}


\author{\IEEEauthorblockN{Bogdan Alexandru Caprarescu, Dana Petcu}
\IEEEauthorblockA{Research Institute e-Austria and West University of Timi\c{s}oara\\
Timi\c{s}oara 300223, Rom\^{a}nia\\
\{bcaprarescu, petcu\}@info.uvt.ro}
}

\maketitle

\begin{abstract}
The DEPAS (Decentralized Probabilistic Auto-Scaling) algorithm assumes an overlay network of computing nodes where each node probabilistically decides to shut down, allocate one or more other nodes or do nothing. DEPAS was formulated, tested, and theoretically analyzed for the simplified case of homogenous systems. In this paper, we extend DEPAS to heterogeneous systems.
\end{abstract}

\begin{IEEEkeywords}
auto-scaling; decentralized computing; randomized algorithms; cloud computing
\end{IEEEkeywords}

\section{Introduction}

As the main characteristic of cloud computing, the on-demand provisioning of hardware resources creates the premises for a theoretically infinite scalability of services deployed in the cloud \cite{armbrust2010}. As a result, auto-scaling has become an interesting topic in both academia and industry. On one hand, the researchers focus on complex solutions that optimize the resource consumption and the QoS of the services. On the other hand, some cloud providers, such as Amazon EC2 and Rightscale, offer policy-based auto-scaling solutions that can be easily configured by their customers. 

A common characteristic of the large majority of research and industrial solutions for auto-scaling consists in their centralization. In this way, the infinite scalability that is theoretically possible is jeopardized by the central component running the auto-scaling algorithm. This is not only a theoretical problem and the limitations of centralized management were experienced with industrial systems such as VMware \cite{meng2010}. Moreover, a central manager acts as a single point of failure, too.

Our aim is to provide an auto-scaling algorithm that is both scalable and fault tolerant. To achieve this goal, we take inspiration from P2P systems which proved to be highly scalable and robust \cite{keong2005}. Thus, in \cite{calcavecchia2012}, we described DEPAS, a decentralized probabilistic auto-scaling algorithm. DEPAS assumes an overlay network of nodes. Each node is a virtual machine that runs the same software comprising the functional service and a few non-functional components: overlay manager, load balancer, and auto-scaler. Each non-functional component runs a decentralized algorithm. The auto-scaler runs DEPAS. 

The parameters of DEPAS are the desired load, $L_0$, and the load variation, $\delta$. The goal of DEPAS is to maintain the average load of the system in the interval $(L_0 - \delta, L_0 + \delta)$. To do that, each node estimates the average load of the system and if it is not within that interval, then the node probabilistically executes the appropriate scaling action (i.e., remove itself if the load is lower than $L_0 - \delta$ or allocate additional nodes if the load is higher than $L_0 + \delta$). The main problem is how to compute the node-level scaling probability.

DEPAS was originally formulated and tested for the simplified case of homogenous systems in which all nodes have the same capacity \cite{calcavecchia2012}. The link between $\delta$, number of nodes, and the probability of allocating the right number of nodes was theoretically analyzed and we provided algorithms for finding either the minimum $\delta$ (for a given number of nodes) or the minimum number of nodes (for a given $\delta$) so that a minimum correctness probability is guaranteed \cite{caprarescu2012}.

Some tests with heterogeneous nodes were also performed in \cite{calcavecchia2012} under the assumption that each node randomly choses the capacity of the node to be added with the same distribution as the capacity distribution of the existing nodes. For example, in a system with $70\%$ nodes with a low capacity and $30\%$ nodes with a high capacity a newly allocated node would have $70\%$ chances to be a low capacity one and $30\%$ chances to be a high capacity node. However, this is a highly constrained scenario for which it is difficult to imagine a practical applicability. This is because in practice the type of a new virtual machine is chosen so that to optimize a certain client-defined criterion (for example, the VM type with the minimum cost per capacity unit). The selection of the optimal node type may be related to the criterion to select it, certain aspects of the cloud infrastructure, or the bid for resources from other customers, but it has nothing to do with the capcity distribution of the existing nodes. 

Therefore, in this paper, we extend DEPAS to heterogeneous systems without imposing any constraint on the capacity distribution of the new nodes. Thus, we provide a formula for computing the addition probability that works no matter the capacity of the new nodes. The correctness of the formulas for computing the addition and removal probabilities is proven both theoretically and experimentally.

The remaining of this paper is organized as follows. The DEPAS algorithm for heterogeneous systems is described in Section \ref{sec:depashe} and experimentally verified in Section \ref{sec:results}. Related work is discussed in Section \ref{sec:rw}, while Section \ref{sec:concl} concludes the paper.

\section{DEPAS for heterogeneous Systems}
\label{sec:depashe}

We assume a system composed of $n$ nodes with capacities $C_i, i = 1..n$. The capacity of a node is the maximum number of requests per second that can be processed by the service deployed on that node. Let $C$ be the total capacity of the system (the sum of the capacities of all nodes). A complete list of notations is given in Table \ref{table:notations}. The load of a node, noted with $L_i$, is computed at a given moment in time as a ratio between the average number of requests per second that were either processed or rejected by that node over a certain timeframe and the capacity of the node. Depending on the load balancing algorithm, a node may reject a request in certain cases (e.g., when the system is overloaded and the maximum response time of the request can not be met). Then, the average load of the system, $L$, is computed as a ratio between the workload of the system and the capacity of the system as expressed by equation (\ref{eq:av-load}). Note that in the case when the workload received by the system overcomes its capacity, the average load is supra-unitary.

\begin{equation}
\label{eq:av-load}
L = \frac{\sum_{i=1}^{n} L_i C_i}{\sum_{i=1}^{n} C_i} = \frac{\sum_{i=1}^{n} L_i C_i}{C}
\end{equation}

\begin{table}[b!]
	 \caption{DEPAS notations}
\begin{center}
\begin{tabular}{ | p{0.7cm} | p{6.3cm} | }
	\hline
	$T$ & The duration of a DEPAS cycle (in seconds)\\
	$n$ & Number of nodes of the system \\
	$C$ & Total capacity of the system \\
	$C_i$ & Capacity of node $i$ \\
	$C^{add}_{i}$ & Capacity of the nodes to be added by node $i$ in the current cycle \\
	$C^{rem}_{opt}$ & Optimal capacity to be removed \\
	$C^{add}_{opt}$ & Optimal capacity to be added \\
	$C^{rem}_{exp}$ & Expected capacity to be removed \\
	$C^{add}_{exp}$ & Expected capacity to be added \\
	$L_0$ & Desired load threshold (percent with respect to the capacity) \\
	$\delta$ & Defines the allowed load variation \\
	$L_i$ & Load of node $i$ (percent with respect to node capacity) \\
	$L$ & Average load of the system (percent with respect to the capacity) \\
	$L^{\star}_{i}$ & An estimation of the average load of the system done by node $i$ \\
	$pi_i$ & Probability indicator computed by node $i$ \\
	$p_i$ & Node-level probability computed by node $i$\\
	\hline
\end{tabular}
\end{center}
   
    \label{table:notations}
\end{table}

The DEPAS algorithm for heterogeneous systems is shown in algorithm \ref{alg:pas}. It is periodically run with period $T$ by each node and begins by retrieving an estimation of the average load of the system. Note that the average load is not computed at this time, but just retrieved from the component running the average protocol. If the load is less than or equal to $L_0 - \delta$, then the node computes a removal probability indicator using formula (\ref{eq:removal-prob-indicator}) and, because the indicator is sub-unitary in this case, the node uses it as the probability to remove itself. Otherwise, if the load is higher than or equal to $L_0 + \delta$, then the node obtains its own capacity and the capacity of the node type that is the most convenient to be allocated at this time. Then, the probability indicator is computed using formula (\ref{eq:addition-prob-indicator}). In this situation, the indicator can be supra-unitary where its integer part represents the number of nodes to be added for sure, while its fractional part is used as the probability to add another node. Note that the \textit{random()} function generates a uniformly distributed random decimal number between 0 and 1.

\begin{algorithm}[t!]
\caption{DEPAS for heterogeneous Systems}
\label{alg:pas}
\begin{algorithmic}
\WHILE{$true$}
\STATE  $wait(T)$
\STATE $L^{\star}_{i} \gets estimateAverageSystemLoad()$
\IF {$L^{\star}_{i} \le L_0 - \delta$}	
       \STATE $pi_i \gets computeRemovalProbInd(L^{\star}_{i}, L_0)$
       \STATE $p_i \gets pi_i$
	 \IF {$p_i < random()$}
        	\STATE $removeSelf()$
	\ENDIF
\ELSE
	\IF {$L^{\star} \ge L_0 + \delta$}
		\STATE $C_{i} = getSelfCapacity()$
		\STATE $C^{add}_{i} = computeNewNodesCapacity()$
      		\STATE $pi_i \gets computeAdditionProbInd(L^{\star}_{i}, L_0, C_{i}, C^{add}_{i})$
		\STATE $m \gets \lfloor pi_i \rfloor$
		\STATE $p_i \gets \left\{pi_i\right\}$
		\IF {$p_i < random()$}
        		\STATE $m \gets m + 1$
		\ENDIF
		\STATE $addNodes(m, C^{add}_{i})$
	\ENDIF
\ENDIF
\ENDWHILE
\end{algorithmic}
\end{algorithm}

\begin{equation}
\label{eq:removal-prob-indicator}
pi_i =  \frac{L_0 - L^{\star}_{i}}{L_0}
\end{equation}

\begin{equation}
\label{eq:addition-prob-indicator}
pi_i =  \frac{L^{\star}_{i} - L_0}{L_0} \frac{C_{i}}{C^{add}_{i}}
\end{equation}

The remaining of this section proves that the formulas (\ref{eq:removal-prob-indicator}) and (\ref{eq:addition-prob-indicator}) are correct. They are correct if the expected capacity to be removed or added is equal to the optimal capacity to be removed or added, respectively. The optimal capacity is defined as the amount of capacity that needs to be subtracted from or added to the system so that the new average load is equal to the desired load. Theorems \ref{th:optimal-removal-capacity} and \ref{th:optimal-addition-capacity} provides formulas for computing the optimal capacity to be removed and the optimal capacity to be added, respectively.

\begin{theorem}
\label{th:optimal-removal-capacity}
Let $L_0 \in (0,1)$ be the desired load. Consider a system with total capacity $C$ and average load $L < L_0$. Then, the optimal capacity to be removed from the system is computed as follows: 

$$C^{rem}_{opt} =  \frac{L_0 - L}{L_0} C$$
\end{theorem}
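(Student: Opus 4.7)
The plan is to work directly from the definition of the average load in equation~(\ref{eq:av-load}) and solve for the capacity that must be stripped away so that the resulting system has load exactly $L_0$. The core modeling assumption I will make explicit at the outset is that removing capacity does not change the total workload offered to the system: the requests previously served by the departing nodes are redistributed to the remaining ones by the load balancer, so the aggregate $\sum_i L_i C_i = L \cdot C$ is invariant under the removal operation. This is the natural reading of ``optimal capacity to remove'' in the context of the DEPAS setting described above.

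Given that assumption, I would proceed in three short steps. First, denote by $W = L \cdot C$ the total workload, which by the preceding remark is unchanged by removal. Second, after removing a capacity $C^{rem}$, the new total capacity is $C - C^{rem}$ and the new average load, by equation~(\ref{eq:av-load}), equals $W / (C - C^{rem}) = L C / (C - C^{rem})$. Third, the definition of the optimum is that this new average load equals the desired load $L_0$, giving the single equation
\begin{equation*}
L_0 \;=\; \frac{L\,C}{C - C^{rem}_{opt}}.
\end{equation*}
Solving linearly for $C^{rem}_{opt}$ yields $L_0 (C - C^{rem}_{opt}) = L C$, hence $C^{rem}_{opt} = (L_0 - L)C/L_0$, which is the claimed formula. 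The hypothesis $L < L_0$ together with $L_0 \in (0,1)$ guarantees that $C^{rem}_{opt} \in (0, C)$, so the result is a sensible removal amount.

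The main conceptual obstacle is not algebraic but modelling: one has to argue that conserving the workload $W$ (rather than, say, the per-node loads $L_i$) is the correct invariant for the ``optimal removal'' notion. I would justify this by pointing back to the definitions: $L_i$ is a derived quantity depending on both the offered workload at node $i$ and $C_i$, whereas the number of requests per second arriving at the system is an externally driven quantity which the load balancer merely redistributes. Once this is settled, the algebra is a one-line manipulation and the theorem follows.
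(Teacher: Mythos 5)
Your proposal is correct and follows essentially the same route as the paper: the paper's proof is the single workload-conservation equation $L\,C = L_0\,(C - C^{rem}_{opt})$, which is exactly the equation you derive and solve. Your additional discussion merely makes explicit the modelling assumption (invariance of the total workload under removal) that the paper states in one clause.
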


\begin{proof}
As the system has the same workload before and after removing capacity we have $L C = L_0 (C - C^{rem}_{opt})$, from where it results the formula given by the theorem.
\end{proof}

\begin{theorem}
\label{th:optimal-addition-capacity}
Let $L_0 \in (0,1)$ be the desired load. Consider a system with total capacity $C$ and average load $L > L_0$. Then, the optimal capacity to be added to the system is computed as follows: 

$$C^{add}_{opt} =  \frac{L - L_0}{L_0} C$$
\end{theorem}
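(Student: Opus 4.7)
The plan is to mirror the proof of Theorem \ref{th:optimal-removal-capacity} almost verbatim, exploiting the symmetry between removing and adding capacity. The central observation is that the scaling action does not change the workload served by the system within the current DEPAS cycle; it only changes the total capacity over which that workload is distributed. Since the workload equals load times capacity, fixing the workload and changing the capacity lets me solve for the capacity adjustment that drives the load to $L_0$.

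Concretely, I would first write the workload of the current system as $L \cdot C$. After adding the optimal capacity $C^{add}_{opt}$, the system should have total capacity $C + C^{add}_{opt}$ and, by the definition of \emph{optimal}, its average load should equal the desired load $L_0$. Applying equation (\ref{eq:av-load}) in this post-addition state gives workload $L_0 \cdot (C + C^{add}_{opt})$. Equating the two expressions for the (unchanged) workload yields
\begin{equation*}
L \, C = L_0 \, (C + C^{add}_{opt}),
\end{equation*}
and solving for $C^{add}_{opt}$ produces the stated formula. The hypothesis $L > L_0$ ensures that the resulting $C^{add}_{opt}$ is strictly positive, and $L_0 \in (0,1)$ guarantees that the division by $L_0$ is well defined.

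I do not anticipate any real obstacle here: the argument is a one-line algebraic manipulation once the workload-invariance principle is stated. The only subtlety worth flagging, if space allows, is the implicit modeling assumption that newly added capacity immediately contributes to serving the same workload without itself generating additional demand; this is exactly the assumption tacitly used in Theorem \ref{th:optimal-removal-capacity}, so invoking symmetry with that proof is sufficient justification.
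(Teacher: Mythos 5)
Your proposal is correct and is exactly the argument the paper intends: the paper's proof of this theorem simply says it is analogous to Theorem \ref{th:optimal-removal-capacity}, whose proof equates the invariant workload $LC$ with the post-scaling workload, here $L_0(C + C^{add}_{opt})$, and solves. You have merely written out the analogy explicitly, which matches the paper's approach.
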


\begin{proof}
Analogues with the proof of Theorem \ref{th:optimal-removal-capacity}.
\end{proof}

The expected capacity to be removed/added is computed for a cycle of DEPAS. A cycle has a duration of $T$ seconds in which each node runs DEPAS exactly once. In order to be able to compute the expected capacity we assume that each node precisely estimates the average load of the system, which means that $L^{\star}_{i} = L \forall i = 1..n$. Under these considerations, theorems \ref{th:removal-correctness} and \ref{th:addition-correctness} prove the correctness of the formulas for computing the removal and addition probabilities. 

\begin{theorem}
\label{th:removal-correctness}
Let $L_0 \in (0,1)$ be the desired load. Consider a system with $n$ nodes, total capacity $C$ and average load $L < L_0$. The expected capacity to be removed in a cycle of DEPAS is equal to the optimal capacity to be removed.
\end{theorem}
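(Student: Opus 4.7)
The plan is to express the total capacity removed in one cycle as a sum of independent Bernoulli-weighted capacities, one per node, and then apply linearity of expectation to recover the closed form from Theorem \ref{th:optimal-removal-capacity}. First I would introduce, for each node $i = 1, \dots, n$, a random variable $X_i$ equal to $C_i$ if node $i$ removes itself during the cycle and $0$ otherwise. The total capacity removed in the cycle is then $C^{rem} = \sum_{i=1}^{n} X_i$, and $C^{rem}_{exp} := E[C^{rem}]$ is what we want to compute.

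Next I would record the node-level probability explicitly. Under the hypothesis $L^{\star}_{i} = L$ for every $i$ and $L < L_0$, formula (\ref{eq:removal-prob-indicator}) gives $pi_i = (L_0 - L)/L_0 \in (0,1)$, so the probability indicator is genuinely sub-unitary and the algorithm sets $p_i = pi_i$. Hence $P[X_i = C_i] = (L_0 - L)/L_0$ and $E[X_i] = C_i (L_0 - L)/L_0$. Crucially, this probability is identical for every node, so it factors out of the sum.

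Now I would apply linearity of expectation:
\begin{equation*}
C^{rem}_{exp} = \sum_{i=1}^{n} E[X_i] = \frac{L_0 - L}{L_0} \sum_{i=1}^{n} C_i = \frac{L_0 - L}{L_0}\, C.
\end{equation*}
Comparing with Theorem \ref{th:optimal-removal-capacity} yields $C^{rem}_{exp} = C^{rem}_{opt}$, which is exactly the claim.

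There is no genuine obstacle here: since the nodes act independently in a cycle and all use the same per-node probability, the argument reduces to linearity of expectation. The only subtle point worth flagging in the write-up is justifying that $p_i = pi_i$ (i.e.\ that $pi_i < 1$), which follows immediately from $0 < L < L_0 < 1$; after that the computation is a one-line sum.
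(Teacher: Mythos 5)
Your proof is correct and follows essentially the same route as the paper: the paper's one-line computation $C^{rem}_{exp} = \sum_{i=1}^{n} \frac{L_0-L}{L_0} C_i = \frac{L_0-L}{L_0} C$ is exactly your linearity-of-expectation argument with the random variables left implicit. Your explicit check that $pi_i = (L_0 - L)/L_0 < 1$, so that the indicator is genuinely used as a probability, is a small point the paper omits but is worth keeping.
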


\begin{proof}
$$C^{rem}_{exp} = \sum_{i=1}^{n} \frac{L_0-L}{L_0} C_i = \frac{L_0-L}{L_0} C = C^{rem}_{opt}$$
\end{proof}

\begin{theorem}
\label{th:addition-correctness}
Let $L_0 \in (0,1)$ be the desired load. Consider a system with $n$ nodes, total capacity $C$ and average load $L > L_0$. The expected capacity to be added in a cycle of DEPAS is equal to the optimal capacity to be added.
\end{theorem}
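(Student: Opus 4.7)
The plan is to mirror the short proof of Theorem \ref{th:removal-correctness}, accounting for the two differences in the addition path of Algorithm \ref{alg:pas}: each executing node $i$ may add several nodes in one cycle (a deterministic bulk of $\lfloor pi_i\rfloor$ plus one extra chosen by a Bernoulli trial with parameter $\{pi_i\}$), and each added node has capacity $C^{add}_i$, which need not equal $C_i$ and may differ across nodes. Under the standing assumption $L^{\star}_i = L$, formula (\ref{eq:addition-prob-indicator}) gives $pi_i = \frac{L-L_0}{L_0}\cdot\frac{C_i}{C^{add}_i}$.

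Next, I would compute the expected number of nodes added by a single node $i$ in one cycle; by linearity of expectation this equals $\lfloor pi_i\rfloor + \{pi_i\} = pi_i$. Since each of those added nodes contributes capacity $C^{add}_i$, the expected capacity contributed by node $i$ is $pi_i\cdot C^{add}_i$, in which the factor $C^{add}_i$ cancels to leave $\frac{L-L_0}{L_0}\,C_i$. Summing over $i$, using $\sum_{i=1}^{n} C_i = C$, and invoking Theorem \ref{th:optimal-addition-capacity} then yields
\[
C^{add}_{exp}=\sum_{i=1}^{n}\frac{L-L_0}{L_0}\,C_i=\frac{L-L_0}{L_0}\,C=C^{add}_{opt}.
\]

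I do not anticipate any genuine obstacle: the argument is essentially a one-line expectation computation driven by linearity. The only point worth emphasizing is that formula (\ref{eq:addition-prob-indicator}) is engineered precisely so that the factor $C^{add}_i$ cancels against the capacity of the newly added nodes, which is exactly what decouples the expected total added capacity from each node's local choice of $C^{add}_i$ --- the very generalization to heterogeneous systems claimed in the introduction. If anything required care, it would be making the independence of the Bernoulli trials across nodes explicit so that linearity of expectation applies cleanly when aggregating contributions; but since only expected values (not concentration) are asserted here, even that is not strictly needed.
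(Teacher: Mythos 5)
Your proposal is correct and matches the paper's own proof, which is exactly the one-line computation $C^{add}_{exp} = \sum_{i=1}^{n} \frac{L-L_0}{L_0}\frac{C_i}{C^{add}_i} C^{add}_i = \frac{L-L_0}{L_0} C = C^{add}_{opt}$. You additionally spell out why the expected number of nodes added by node $i$ equals $\lfloor pi_i\rfloor + \{pi_i\} = pi_i$, a step the paper leaves implicit.
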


\begin{proof}
$$C^{add}_{exp} = \sum_{i=1}^{n} \frac{L-L_0}{L_0} \frac{C_i}{C^{add}{i}} C^{add}{i} = \frac{L-L_0}{L_0} C = C^{add}_{opt}$$
\end{proof}

In a previous paper \cite{calcavecchia2012}, the addition probability indicator was computed with formula (\ref{eq:old-addition-prob-indicator}). We are interested in which conditions this formula leads to a correct allocation in a heterogeneous systems. A correct allocation happens when $C^{add}_{exp} = C^{add}_{opt}$. By replacing the expected and optimum capacities with their formulas and making the simplifications it results equation (\ref{eq:old-condition}).

\begin{equation}
\label{eq:old-addition-prob-indicator}
pi_i =  \frac{L-L_0}{L_0}
\end{equation}

\begin{equation}
\label{eq:old-condition}
\sum_{i=1}^{n} C^{add}_{i} = C
\end{equation}

From equation (\ref{eq:old-condition}) it turns out that formula (\ref{eq:old-addition-prob-indicator}) can be used to compute the addition probability in a heterogeneous system only if the sum of the potential capacities to be added by each node is equal to the total capacity of the system. This is obviously a very particular situation without any practical motivation.

Therefore, in this section, we provided a general formula -- expressed by equation (\ref{eq:addition-prob-indicator}) -- for computing the addition probability of the DEPAS algorithm in a heterogeneous system.

\section{Experimental Results}
\label{sec:results}

In the previous section we proved that the formulas for computing the removal and addition probabilities are correct providing that each node knows the average load of the system. In this section we relax this requirement and experimentally show that those formulas lead to good allocations even if the average load of the system is approximated at each node with the average load of the node and its neighbors.

For running the experiments we adapted the simulator that was used in a previous paper \cite{calcavecchia2012}. This simulator is based on the Protopeer library \cite{protopeer} and is available for download \cite{depashe}. The simulator is described in Subsection \ref{subsec:results:exp-settings}, while the results of the experiment are shown in Subsection \ref{subsec:results:exp-res}.

\subsection{Settings}
\label{subsec:results:exp-settings}

In the simulator we designed three types of peers: client, entry point, and worker. In the experiment that is described in this section we use only one client and one entry point. The client issues requests according to an exponential distribution whose mean value follows a given workload track (see Subsection \ref{subsec:results:exp-res}). The requests arrive at the entry point which knows a percent of all workers but not less than a given minimum. The workers known by the entry point are randomly selected and periodically renewed. The entry pointed dispatches each request to one worker according to a capacity-weighted random load balancing strategy. The values used in the experiment for the parameters that are discussed in this subsection are listed in Table \ref{table:sim-parameters}.

\begin{table}
	 \caption{Parameters of the simulator}
\begin{center}
\begin{tabular}{ | p{6.3cm} | p{0.7cm} | }
	\hline
	Min no of entry point neighbors & 50\\
	Percent of entry point neighbors & 2\%\\
	Entry point neighbor reshuffle period & 120s\\
	Overlay degree & 50\\
	Overlay management cycle & 0.5s\\
	Max queue size & 3\\
	Max no of hops & 10\\
	Mean execution time & 1\\
	Load monitoring period & 60s\\
	$T$ (DEPAS cycle duration) & 60s\\
	$L_0$ & 0.7\\
	$\delta$ & 0.1\\
	\hline
\end{tabular}
\end{center}
   
    \label{table:sim-parameters}
\end{table}

The nodes are organized into an unstructured overlay network where each node runs an overlay management algorithm to maintain a list of neighbors. We adapted the gossip-based overlay management algorithm developed by Jelasity et al. \cite{jelasity2005} to obtain two characteristics needed by our system: quick removal of links to dead nodes (needed by both load balancing and average load estimator) and low-deviation in-degree (needed for proper load balancing). The resulted overlay management algorithm will be described in a dedicated paper. The values of the main parameters related to overlay management (overlay degree and cycle duration) are given in Table \ref{table:sim-parameters}.

The first-level load balancing performed by the entry point is accompanied by a decentralized load balancing that is performed by the workers. The decentralized load balancing algorithm is taken from \cite{adam2006} and adapted for heterogeneous systems. The idea is that, when a worker receives a request (which may come from either an entry point or other worker), an admission function is called to decide whether the request is scheduled on the current node, routed to a neighbor, or rejected. More concretely, each node uses an internal queue to store the requests that are pending for execution and prioritize them depending on the time they were issued by the client (i.e., older requests get priority over newer requests). If the length of the queue divided by the capacity of the node is higher than a threshold then the request is added to the queue. Otherwise, if the number of nodes already visited by the request without being scheduled on neither of them (called number of hops) is less than a threshold, then the request is routed to a neighbor of the current node in the overlay network. The neighbor is selected using the same capacity-weighted random strategy that is used by the entry point. Finally, if the maximum number of hops is reached, then the request is rejected. 

The request execution time is exponentially distributed with a constant mean. The capacity of a node is expressed in the simulator using an integer number $c$ which means that the node can execute $c$ requests in parallel. The mean execution time is always the same no matter the capacity of the node.

The average load of a node is computed per second by counting the requests that were either processed or rejected by the current node over a timeframe called load monitoring period. The average load of the system is approximated with the average load of the current node and its neighbors. The load monitoring period and the DEPAS cycle duration (i.e., the time between two consecutive executions of DEPAS on the same node) are important parameters because they affect the reactivity and accuracy of DEPAS. On one hand, small values of these parameters make DEPAS to quickly react to workload changes but also very sensible to oscillations (i.e., additions and removals of nodes mixed in a row). Of course, the oscillations should be avoided because they waste the money of the customer. On the other hand, higher values of the load monitoring period and the DEPAS cycle make the system more stable at the cost of delaying its reactivity and thus rejecting more requests when confronted with workload bursts. 

Finally, the values of the desired load and load variation threshold are also given in Table \ref{table:sim-parameters}.

\subsection{Results}
\label{subsec:results:exp-res}

Provided that a list of fresh neighbor information is available at each node and that the system is properly load balanced, the DEPAS algorithm works properly and is very scalable and fault tolerant. It is very scalable because (i) it is very simple and is run once every $T$ seconds and (ii) it does not require any message exchange between neighbors. It is robust because it does not care about node crashes as the update of neighbors information is the task of the overlay management algorithm. Therefore, a complete solution based on DEPAS is scalable and robust as long as it employs scalable and robust overlay management and load balancing algorithms. But the scalability and robustness of these algorithms were already experimentally proved in \cite{calcavecchia2012}. This is why, in this paper, we experimentally check only the correctness of the DEPAS algorithm for heterogeneous systems. More concretely, for a dynamic workload scenario we verify that the allocated capacity is close to the optimum capacity.

The overlay management, load balancing, and auto-scaling algorithms are non-deterministic. Therefore, meaningful results can be obtained only by averaging the results of several identical, but independently performed experiments. The results presented in this paper are the average results of 32 identical experiments. All experiments were sequentially executed on one Amazon High-CPU Medium Instance (1.7 GB of memory, 2 virtual cores with 2.5 EC2 Compute Units each) running Amazon Linux 64-bit.

Figure \ref{fig:workload-track} shows the mean workload track used in our experiment. The actual workload is exponentially distributed with the dynamic mean given by the workload track. The simulated experiment lasts 2600 seconds. The mean workload is 70 requests/s at the beginning of the experiment and increases in a few steps to reach 2200 requests/s at its peak. Then, to also test the scale out branch of DEPAS, the mean workload is decreased in steps. 

We use two types of nodes: low capacity nodes with a capacity of 1 request/s and high capacity nodes with a capacity of 5 requests/s. The system is initialized with 100 low capacity nodes, which have the perfect capacity for handling the initial workload of 70 requests/s. The existing nodes can add high-capacity nodes in the first 1099 seconds and low capacity nodes after second 1100 inclusive (which marks the last workload burst as shown in Figure \ref{fig:workload-track}).

Figure \ref{fig:allocated-capacity} shows the capacity allocated by DEPAS versus the optimum capacity computed for the desired load ($L_0$), min load ($L_0 - \delta$), and max load ($L_0 + \delta$). It can be seen that, after a period of adaptation, DEPAS allocates a capacity that is between the optimum capacity at max load and optimum capacity and min load. The delay in adaptation is caused by the duration of the DEPAS cycle (60 seconds) and by the fact that the load is averaged over the last 60 seconds. But, as explained in Subsection \ref{subsec:results:exp-settings}, the good side of these setting consists in the fact that, despite its randomized and decentralized nature, DEPAS is very stable and did not cause any capacity oscillation.
\begin{figure}
        \includegraphics[scale=0.8]{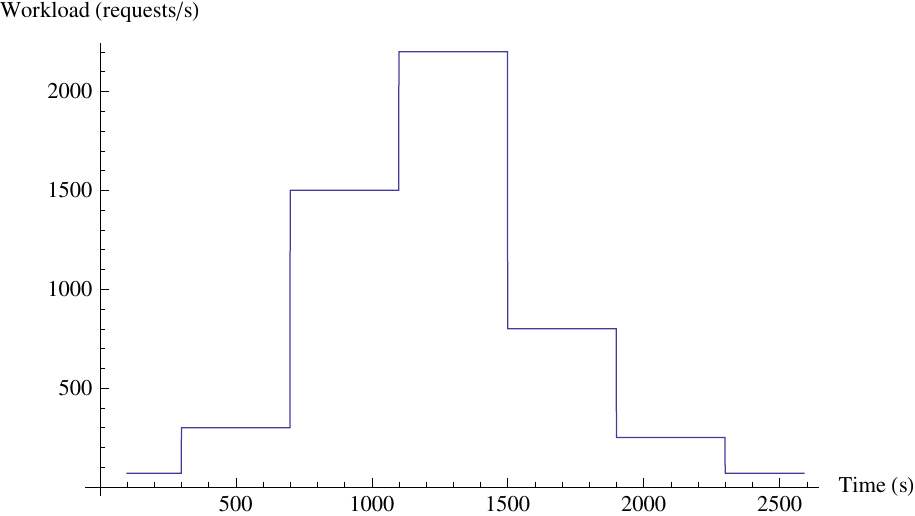}
        \caption{Mean workload track}
        \label{fig:workload-track}
\end{figure}

\begin{figure}
        \includegraphics[scale=0.8]{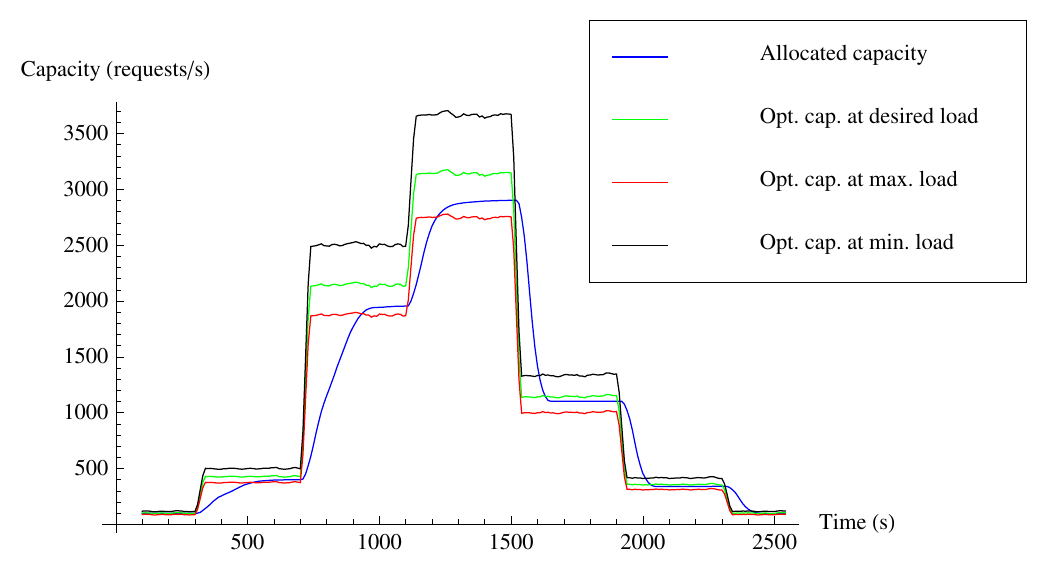}
        \caption{Allocated vs. optimal capacity}
        \label{fig:allocated-capacity}
\end{figure}

\begin{figure}
        \includegraphics[scale=0.8]{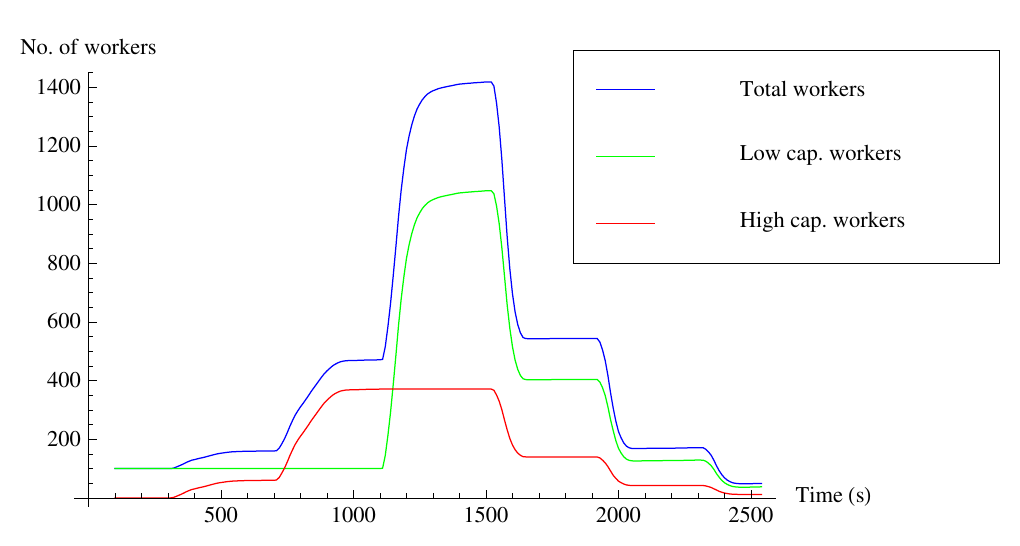}
        \caption{Number of nodes}
        \label{fig:nb-of-nodes}
\end{figure}

Figure \ref{fig:nb-of-nodes} shows the variation of the total number of nodes, number of low capacity nodes, and number of high capacity nodes. We can see that the system, initially composed of 100 low capacity nodes, adapts to the increasing workload by adding high capacity nodes before the second 1100 and low capacity nodes afterwards. Then, during the scale out, the system removes both low capacity and high capacity nodes. As expected, the percents of both types of nodes out of the total number of nodes seems to remain constant while scaling out. Figure \ref{fig:nb-of-nodes} confirms that DEPAS does not make any over-provisioning (i.e., allocation of more nodes that needed) or over-de-provisioning (i.e., removal of more nodes than needed).

In conclusion, the experimental results show that, facing a variable workload in a heterogeneous system, DEPAS allocates the right capacity even if each node works with a local approximation of the average load.

\section{Related Work}
\label{sec:rw}

In this paper we discuss a few other decentralized approaches to auto-scaling. A detailed state of the art on autonomic resource provisioning for cloud computing can be found in \cite{calcavecchia2012}.

A decentralized economic-inspired solution to the auto-scaling of component-based systems was proposed by Bonvin et al. \cite{bonvin2011}. They use a multi-agent approach in which each server is managed by a server agent. This agent makes decisions related to the migration/replication/removal of the components deployed on that server. The problem is that each agent stores a complete mapping (maintained through gossiping) of components and servers. In other words, each agent has a complete view of the system. Because of that, although the approach is decentralized in the sense that there is no central manager, it is not scalable with respect to the number of components and servers.

Another decentralized auto-scaling approach was proposed by Wuhib et al. \cite{wuhib2010}. They aim to develop a Platform as a Service for hosting sites in the cloud. In their approach, each virtual machine is managed by a VM manager which is connected through a custom overlay network to other VM managers that store instances of the same sites. The utility of a site instance is computed as the ratio between the allocated CPU capacity and the CPU demand. The utility of the system is the minimum utility of all instances of all sites. A decentralized heuristic algorithm is used to maximize the utility of the system while minimizing the cost of adaptation. The resulting system is scalable with respect to the number of virtual machines and the number of sites, but it is not scalable with respect to the number of instances of a site.

Montresor and Zandonati proposed a decentralized algorithm for selecting a slice of a P2P system \cite{montresor2008}. This slice may contain nodes with given characteristics that are needed for running a certain distributed application. Their approach shares the same idea with DEPAS: each node probabilistically decides tojoin the slice or depart from the slice. But their approach is more complex and less scalable than DEPAS because they use an epidemic broadcast algorithm to inform all nodes about the slice to be created and a peer counting algorithm that provides each node with an estimation of the slice size. The key to the high scalability of DEPAS is that a node does not need to know either the total number of nodes or the total capacity of the system.  

\section{Conclusion}
\label{sec:concl}

The Decentralized Probabilistic Auto-Scaling (DEPAS) algorithm can be used to deploy large-scale service systems whose scalability is limited only by the amount of virtualized resources that can be rented from IaaS providers. DEPAS assumes that the computing nodes are organized into an unstructured overlay network and run a scalable load balancing algorithm. DEPAS is run by each node which probabilistically decides to scale in and out. The main problem in DEPAS is how to compute this probability.

In \cite{calcavecchia2012} and \cite{caprarescu2012}, DEPAS was formulated, tested, and theoretically analyzed for the simplified case of homogenous systems. In this paper, we provided scaling probabilities formulas that work in the general case of heterogeneous systems. We proved both theoretically and experimentally that, by using the proposed formulas, DEPAS reacts to workload variations by allocating the right capacity in the first place.    
 
As future work, this paper will be extended with more and larger scale experimental results.

\section*{Acknowledgment}
This research has been partially funded by the Romanian National Authority for Scientific Research, CNCS Ð UEFISCDI, under project PN-II-ID-PCE-2011-3-0260 (AMICAS) and by the European Commission, under project FP7-ICT-2009-5-256910 (mOSAIC). Bogdan Caprarescu is partially supported by IBM through a PhD Fellowship Award.

We would like to thank to Nicola Calcavecchia and Daniel Dubois for their contribution to the development of the DEPAS simulator.

\bibliographystyle{IEEEtran}
\bibliography{IEEEabrv,adaptive2012}

\end{document}